\title{A note on the complexity of {\sc k-metric dimension}}
\author{Yannick Schmitz \and Duygu Vietz \and Egon Wanke}
\institute{Heinrich-Heine-Universit\"at D\"usseldorf, Germany\\ \email{yannick.schmitz@hhu.de, duygu.vietz@hhu.de, egon.wanke@hhu.de}}
\newtheorem{theorem}{Theorem}
\newtheorem{example}{Example}
\authorrunning{Schmitz, Vietz, Wanke}
\titlerunning{The complexity of {\sc k-metric dimension}}
\newcommand{\DP}[3]{
\begin{center}
\begin{tabularx}{.75\textwidth}{lL}
\hline\hline
\multicolumn{2}{c}{\sc{#1}} \\
\hline
\em{Instance:}& #2\\
\em{Question:}& #3 \\ 
\hline\hline
\end{tabularx}
\end{center}
}
\newcolumntype{L}{>{\raggedright\arraybackslash}X}
\newcolumntype{R}{>{\raggedleft\arraybackslash}X}
\newcolumntype{C}{>{\centering\arraybackslash}X}
\begin{document}

\maketitle

\begin{abstract}
Two vertices $u, v \in V$ of an undirected connected graph $G=(V,E)$ are {\em resolved} by a vertex $w$ if the distance between $u$ and $w$ and the distance between $v$ and $w$ are different. A set $R \subseteq V$ of vertices is a {\em $k$-resolving set} for $G$ if for each pair of vertices $u, v \in V$ there are at least $k$ distinct vertices $w_1,\ldots,w_k \in R$ such that each of them resolves $u$ and $v$. The {\em $k$-Metric Dimension} of $G$ is the size of a smallest $k$-resolving set for $G$. The decision problem {\sc $k$-Metric Dimension} is the question whether G has a $k$-resolving set of size at most $r$, for a given graph $G$ and a given number $r$. In this paper, we proof the NP-completeness of {\sc $k$-Metric Dimension} for bipartite graphs and each $k \geq 2$.
\end{abstract}

\section{Introduction}

The metric dimension of graphs has been introduced in the 1970s independently by Slater \cite{Sla75} and by Harary and Melter \cite{HM76}. We consider simple undirected and connected graphs $G=(V,E)$, where $V$ is the set of vertices and $E \subseteq \{\{u,v\}\,|\, u,v \in V, u \not=v \}$ is the set of edges. Such a graph {\em has metric dimension} at most $r$ if there is a vertex set $R \subseteq V$ such that $|R| \leq r$ and $\forall u,v \in V$, $u \not=v$, there is a vertex $w \in R$ such that $d(w,u) \not= d(w,v)$, where $d(u,v)$ is the distance (the length of a shortest path in an unweighted graph) between $u$ and $v$. The {\em metric dimension} of $G$ is the smallest integer $r$ such that $G$ has metric dimension at most $r$.

If $d(w,u) \not= d(w,v)$, for three vertices $u,v,w$, we say that $u$ and $v$ are {\em resolved} or {\em distinguished} by vertex $w$. If every pair of vertices is resolved by at least one vertex of a vertex set $R$, then $R$ is a {\em resolving set} or {\em metric generator} for $G$. In certain applications, the vertices of a resolving set are also called {\em landmark nodes} or {\em anchor nodes}. This is a common naming, particularly in the theory of sensor networks. 

The metric dimension finds applications in various areas, including network discovery and verification \cite{BEEHHMR05}, geographical routing protocols \cite{LA06}, combinatorial optimization \cite{ST04},
sensor networks \cite{HW12}, robot navigation \cite{KRR96} and chemistry \cite{CEJO00,Hay17}.

There are several algorithms for computing a minimum resolving set in polynomial time for special classes of graphs, for example trees \cite{CEJO00,KRR96}, wheels \cite{HMPSCP05}, grid graphs \cite{MT84}, $k$-regular bipartite graphs \cite{BBSSS11}, amalgamation of cycles \cite{IBSS10} and outerplanar graphs \cite{DPSL12}. The approximability of the metric dimension has been studied for bounded degree, dense and general graphs in \cite{HSV12}. Upper and lower bounds on the metric dimension are considered in \cite{CGH08,CPZ00} for further classes of graphs.

In this paper, we consider the {\em $k$-Metric Dimension} for some positive integer $k$. A set $R \subseteq V$ of vertices is a {\em $k$-resolving set} for $G$ if for each pair of vertices $u,v \in V$ there are at least $k$ vertices $w_1,\ldots,w_k \in R$ such that each of them resolves $u$ and $v$. The {\em $k$-Metric Dimension} of $G$ is the size of a smallest $k$-resolving set for $G$. The {\sc $k$-Metric Dimension} problem was introduced by Estrada-Moreno et al. in \cite{ERY13}. The $1$-metric dimension is simply called metric dimension. The $2$-metric dimension is also called {\em fault-tolerant metric dimension} and was introduced in \cite{HMSW08}.

Estrada-Moreno et al. analysed the {\sc $(k, t)$-Metric Dimension} \cite{EYR16}. The {\sc $(k, t)$-Metric Dimension} is the {\sc $k$-Metric Dimension}, with the addition, that the distance between two vertices $u, v$ of $G$ is defined as the minimum of $d(u,v)$ and $t$. Therefore, if $t$ is set to the diameter of $G$, the {\sc $(k, t)$-Metric Dimension} is the same as the {\sc $k$-Metric Dimension}. Estrada-Moreno et al. showed the NP-completeness of {\sc $(k, t)$-Metric Dimension} for odd values of $k$.

The decision problem {\sc $k$-Metric Dimension} is defined as follows.

\DP
{$k$-Metric Dimension}
{An undirected connected graph $G=(V,E)$ and a number~$r$.}
{Is there a $k$-resolving set $R \subseteq V$ for $G$ of size at most $r$?}

The complexity of deciding {\sc $k$-Metric Dimension} has only been investigated for very few graph classes, such as trees and other simple graph classes. For general graph classes, {\sc $k$-Metric Dimension} is assumed to be NP-complete if $k$ is given as part of the input. The decision problem {\sc 1-Metric Dimension} is known to be NP-complete, see \cite{GJ79}. A proof can be found in \cite{KRR96}. In this paper, we show the NP-completeness of {\sc $k$-Metric Dimension} for bipartite graphs and each $k \geq 2$ by an alternative approach to \cite{YER17}, whose proof unfortunately is incorrect and does not offer any simple correction options.
 
\section{The NP-completeness of {\sc $k$-Metric Dimension}}

In this section, {\sc $k$-Metric Dimension} is shown to be NP-complete for bipartite graphs and each $k \geq 2$ by a reduction from {\sc 3-Dimensional $k$-Matching}, which is defined as follows.

\DP
{3-Dimensional $k$-Matching (3D$k$M)}
{A set $S \subseteq A \times B \times C$, where $A$, $B$ and $C$ are disjoint sets of the same size $n$.}
{Does $S$ contain a $k$-matching, i.e. a subset $M$ of size $k \cdot n$ such that each element of $A$, $B$ and $C$ is contained in exactly $k$ triples of $M$?}

For $k=1$, the {\sc 3D1M} problem is the well-known NP-complete {\sc 3-Dimensional Matching (3DM)} problem, see \cite{GJ79}. The next theorem shows that {\sc 3D$k$M} is also NP-complete for each $k \geq 2$.

\begin{theorem}\label{theorem-3D$k$M}
\label{T01}
{\sc 3D$k$M} is NP-complete for each $k \geq 2$.
\end{theorem}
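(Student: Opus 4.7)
I would prove Theorem~\ref{T01} by a polynomial-time reduction from {\sc 3DM} (which is {\sc 3D1M}), classically NP-complete~\cite{GJ79}. Membership of {\sc 3D$k$M} in NP is routine: a claimed subset $M\subseteq S$ of size $kn$ can be verified to be a $k$-matching by inspecting the degree of every element.

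Given a {\sc 3DM} instance $(A,B,C,S)$ with $|A|=|B|=|C|=n$, the plan is to construct a {\sc 3D$k$M} instance $(A',B',C',S')$ by padding each side with $n(k-1)$ auxiliary elements so that $|A'|=|B'|=|C'|=nk$, and by letting $S'$ consist of the original triples of $S$ together with a family of auxiliary triples. The auxiliary triples would be chosen so that in any $k$-matching of $S'$, each original element $a\in A$ (and similarly every $b\in B$, $c\in C$) is forced to use exactly $k-1$ auxiliary triples, leaving its remaining degree of $1$ to be supplied by some triple of $S$, while every auxiliary element attains its full degree $k$ through a rigid local configuration.

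The forward direction would then be handled by taking any perfect matching $M$ of $S$ and extending it to an explicit $k$-matching of $S'$ by attaching to it the canonical covering of the auxiliary elements by the auxiliary triples; a direct degree count on each of the three sides shows that the resulting set has exactly $k\cdot nk$ triples and that every vertex meets its required degree.

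The main obstacle lies in the backward direction. Given a $k$-matching $M'$ of $S'$, one must argue that $M'\cap S$ is already a perfect matching of $S$. The delicate point is to rule out configurations in which some original element of $A$ participates in two or more $S$-triples of $M'$ at the expense of fewer auxiliary triples, which would yield a $k$-matching of $S'$ corresponding not to a $1$-matching of $S$ but to some ``higher-order'' structure on $S$ (a phenomenon that already arises for $k=2$ with hypergraphs that admit a $2$-matching but no $1$-matching). The crux of the proof is therefore to design the auxiliary triples so rigidly that at every auxiliary element the degree requirement forces essentially a single local configuration, which in turn forces exactly one $S$-triple per original element, and hence forces $M'\cap S$ to be a perfect matching of $S$.
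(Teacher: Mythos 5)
Your overall strategy matches the paper's: reduce from {\sc 3DM}, pad each of $A,B,C$ with auxiliary elements, add auxiliary triples that are forced into any $k$-matching and that leave each original element with residual degree exactly $1$. However, as written the proposal has a genuine gap: you never construct the family of auxiliary triples, you only state the specification it must satisfy (``the auxiliary triples would be chosen so that \dots'') and you explicitly flag the forcing argument as ``the crux'' without resolving it. That construction and its verification are essentially the whole content of the proof, so what you have is a proof plan rather than a proof. In particular, it is not obvious a priori that a triple family with the required rigidity exists inside $A'\times B'\times C'$ for every $k$, and exhibiting one in polynomial time is exactly the nontrivial step.

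For comparison, the paper resolves your ``crux'' not by local rigidity at each auxiliary element but by a global exact-count argument. It pads each side by $2n$ elements (after first adjusting $n$ to be a multiple of $k-1$) and adds two groups of auxiliary triples: a diagonal set $R$ of $2n$ triples covering each auxiliary element exactly once and no original element, and a set $T$ of $3n(k-1)$ triples, each containing exactly one original and two auxiliary elements, built from a combinatorial block design $T_{p,q}$ and covering every element (original and auxiliary) exactly $k-1$ times. Since $R\cup T$ covers each auxiliary element exactly $1+(k-1)=k$ times and no triple of $S$ touches an auxiliary element, there is no slack: every triple of $R\cup T$ must belong to any $k$-matching $M'$. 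Consequently each original element is already covered exactly $k-1$ times by $T$, so $M'\setminus(R\cup T)\subseteq S$ must cover it exactly once, i.e., it is a perfect matching. This is the precise mechanism that rules out the ``higher-order structure'' you worry about; your proposal would be complete once you supply an explicit gadget with these exact coverage multiplicities (your degree counts with $n(k-1)$ auxiliary elements per side are consistent, but the witnessing triple family still has to be written down).
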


\begin{proof}
The {\sc 3D$k$M} problem is obviously in NP, because it can be checked in polynomial time whether a selection of triples from $S$ is a $k$-matching.

The NP-hardness is shown by a reduction from {\sc 3DM}. Let 
$$\begin{array}{ll}
A=\{a_1,\ldots,a_n\}, &
B=\{b_1,\ldots,b_n\}, \\
C=\{c_1,\ldots,c_n\}, \text{ and} &
S=\{s_1,\ldots,s_m\} \\
\end{array}$$
be an instance for {\sc 3DM}. Without loss of generality, $n$ is assumed to be a multiple of $(k-1)$, that is $n=r  (k-1)$ for a positive integer $r$. If this is not the case, then expand $A$, $B$ and $C$ by at most $k-2$ elements each and $S$ by at most $k-2$ triples, which cover every additional element exactly once and none of the originally given elements.

Now consider the following instance for {\sc 3D$k$M} defined by
$$\begin{array}{ll}
A'= A \, \cup \, \{a_{n+1},\ldots,a_{3n}\}, &
B'= B \, \cup \, \{b_{n+1},\ldots,b_{3n}\}, \\
C'= C \, \cup \, \{c_{n+1},\ldots,c_{3n}\}, \text{ and} &
S'= S \, \cup \, R \, \cup \, T \\
\end{array}$$
where $R=\{(a_i,b_i,c_i) \, | \, n+1 \leq i \leq 3n\}$ and $T \subseteq A' \times B' \times C'$. Set $T$ is a set with $3  n (k-1)$ triples, which will be defined later.

The set $A'$, $B'$ and $C'$ is the set $A$, $B$ and $C$ respectively, each expanded by additional $2  n$ elements. Set $S'$ is the set $S$ expanded by the $2  n$ triples of $R$ and the $3  n  (k-1)$ triples of $T$.

Let $U = A \cup B \cup C$ and $U' = A' \cup B' \cup C'$. The $2n$ triples of $R$ cover each element of $U' \setminus U$ exactly once and no element of $U$. Set $T$ will be defined such that its $3 n (k-1)$ triples cover each element of $U'$ exactly $k-1$ times. Each triple of $T$ will have exactly one element from $U$ and two elements from $U' \setminus U$.

If $M$ is a matching for $U$ then $M \cup R \cup T$ is obviously a $k$-matching for $U'$ for any $k \geq 2$. Any $k$-matching $M'$ for $U'$ contains all triples from $R$ and $T$, because otherwise it is not possible to cover the elements of $U'\setminus U$ at least $k$ times. The triples of $T$ cover the elements of $U'$ exactly $k-1$ times. That is, if $M'$ is a $k$-matching for $U'$ then $M = M'\setminus (R \cup T)$ is a matching for $U$.

The set $T$ of triples can be easily defined with the help of a set
$$T_{p,q} \, \subseteq \, (A \times B) \, \cup \, (A \times C) \, \cup \, (B \times C)$$ of tuples defined by $$T_{p,q} =
\begin{array}{ll}
      & \{ (a_i,b_j) \, | \, i \in \{p,\ldots,p+q-1\}, \, j \in \{p+q,\ldots,p+2q-1\}\} \\
\cup  & \{ (b_i,c_j) \, | \, i \in \{p,\ldots,p+q-1\}, \, j \in \{p+q,\ldots,p+2q-1\}\} \\
\cup  & \{ (c_i,a_j) \, | \, i \in \{p,\ldots,p+q-1\}, \, j \in \{p+q,\ldots,p+2q-1\}\}. \\
\end{array}
$$ 
These $3 q^2$ tuples cover each element of $$\{ a_{p}, \ldots, a_{p+2q-1}, \, b_{p},\ldots,b_{p+2q-1}, \, c_{p},\ldots,c_{p+2q-1} \}$$ exactly $q$ times. There are 
\begin{itemize}
\item $q^2$ tuples between the elements of $\{a_{p},\ldots,a_{p+q-1}\}$ and  $\{b_{p+q},\ldots,b_{p+2q-1}\}$,
\item $q^2$ tuples between the elements of $\{b_{p},\ldots,b_{p+q-1}\}$ and  $\{c_{p+q},\ldots,c_{p+2q-1}\}$, and
\item $q^2$ tuples between the elements of $\{c_{p},\ldots,c_{p+q-1}\}$ and  $\{a_{p+q},\ldots,a_{p+2q-1}\}$.
\end{itemize}
Now let $T'$ be the set of tuples defined by
$$T' \, = \, \bigcup_{i=0}^{r-1} T_{n+1+i2(k-1), \, k-1}, \, \text{ with } r = \frac{n}{k-1}.$$
$T'$ contains $r 3 (k-1)^2 \, = \, \frac{n}{k-1} \cdot 3 (k-1)^2 \, = \, 3 n (k-1)$ tuples. It is the union of $r = \frac{n}{k-1}$ sets $T_{p,q}$ where index $p$ is running from $n+1$ to $3n+1-2(k-1)$ in steps of width $2(k-1)$ and $q \, = \, k-1$. These tuples of $T'$ cover each element of $U' \setminus U$ exactly $(k-1)$ times.

In the last step, the $3n(k-1)$ tuples of $T'$ are expanded to $3n(k-1)$ triples for $T$, by including each element from $U$ to exactly $k-1$ tuples from $T'$, such that each generated triple is from the set $A' \times B' \times C'$. Each tuple from $T'$ is extended by exactly one element from $U$. The result is the set $T$ of triples with the required properties. This transformation can obviously be done in polynomial time, see also Example \ref{E01}.
\end{proof}

\begin{figure}
\center
\includegraphics[width=0.90\textwidth]{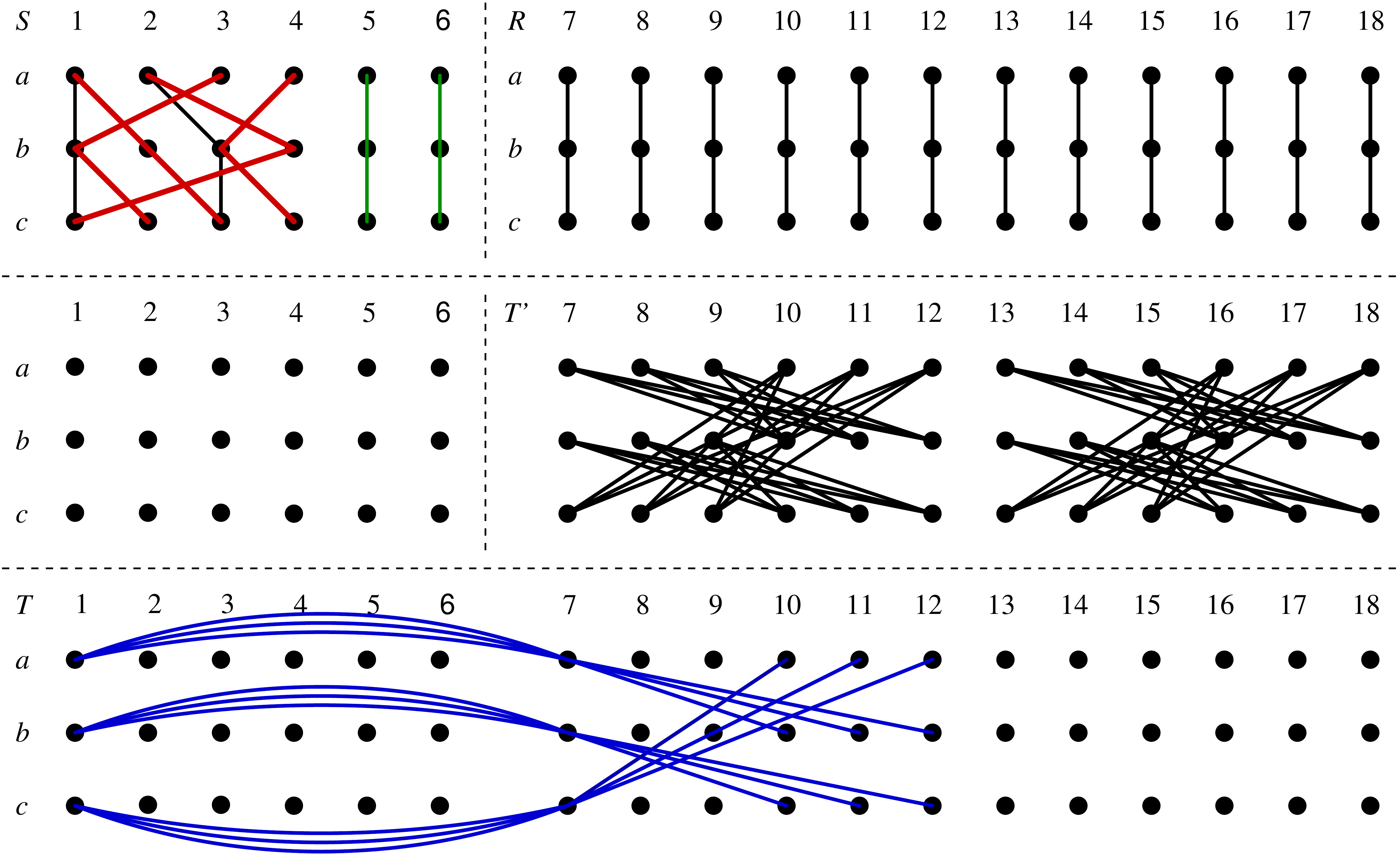}
\caption{This graphic illustrates the transformation from {\sc 3DM} to {\sc 3D$k$M} for $k=4$ as explained in Example \ref{E01}. The drawing on the top left visualizes an instance with $6$ triples in $S$ that cover the elements $\{a_1,\ldots,a_4,b_1,\ldots,b_4,c_1,\ldots,c_4\}$. The triples are indicated by 6 red and 2 black lines, each covering 3 elements. Set $S$ contains a matching indicated by the red lines. Each set $A$, $B$ and $C$ is extended by two element $a_5,a_6$, $b_5,b_6$ and $c_5,c_6$ respectively, and set $S$ is extended by two triples $(a_5,b_5,c_5),(a_6,b_6,c_6)$, such that the number of elements in the new sets $A$, $B$ and $C$ is a multiple of $(k-1)=3$. These two triples are indicated by green lines. The drawing in the middle right visualizes the $2 \cdot 6 = 12$ triples of $R$ indicated by black lines. The drawing at the bottom visualizes the $54$ tuples of $T' = T_{7,3} \cup T_{13,3}$, also indicated by black lines, each covering 2 elements. The set $T$ is formed from set $T'$ by adding each element of $A$, $B$ and $C$ to $k-1=3$ tuples of $T'$. For the sake of clarity, only the triples from T for the elements $a_1$, $b_1$ and $c_1$ are shown in the figure. These triples are indicated by blue lines.}
\label{F01}
\end{figure}

\begin{example}
\label{E01}
Let $A=\{a_1,\ldots,a_4\}$, $B=\{b_1,\ldots,b_4\}$, $C=\{c_1,\ldots,c_4\}$ and $$S= \{(a_1,b_1,c_1), (a_1,b_2,c_3), (a_2,b_3,c_3), (a_2,b_4,c_1), (a_3,b_1,c_2), (a_4,b_3,c_4)\}$$
be an instance for {\sc 3DM}. The triple $(a_1,b_2,c_3), (a_2,b_4,c_1), (a_3,b_1,c_2), (a_4,b_3,c_4)$ form a 3-di\-men\-sional matching and thus a solution for {\sc 3DM}.

It follows the construction of an instance for {\sc 3D$k$M} for $k = 4$ as defined in the proof of Theorem \ref{T01}. Integer $n$ has to be a multiple of $k-1 = 3$. To ensure this, $A$ is extended by $a_5$ and $a_6$, $B$ is extended by $b_5$ and $b_6$, $C$ is extended by $c_5$ and $c_6$ and $S$ is extended by $(a_5,b_5,c_5)$ and $(a_6,b_6,c_6)$. Now $n=6$ and $r = \frac{n}{k-1} = 2$.

Then $A'=\{a_1,\ldots,a_{18}\}$, $B'=\{b_1,\ldots,b_{18}\}$, $C'=\{c_1,\ldots,c_{18}\}$ and $R=\{(a_i,b_i,c_i) \, | \, i=7,\ldots,18\}$. Set $T'$ is defined as $T' \, = \, T_{7, \, 3} \, \cup \, T_{13, \, 3}$.
Finally, set $S'$ is defined as
$$S'= S \, \cup \, R \, \cup \, T,$$
where, for example,{\small
$$T_{7,3} = \left\{
\setlength{\arraycolsep}{1pt}
\begin{array}{lllllllll}
(a_7,b_{10}), & (a_7,b_{11}), & (a_7,b_{12}), &
(a_8,b_{10}), & (a_8,b_{11}), & (a_8,b_{12}), &
(a_9,b_{10}), & (a_9,b_{11}), & (a_9,b_{12}), \\
(b_7,c_{10}), & (b_7,c_{11}), & (b_7,c_{12}), &
(b_8,c_{10}), & (b_8,c_{11}), & (b_8,c_{12}), &
(b_9,c_{10}), & (b_9,c_{11}), & (b_9,c_{12}), \\
(c_7,a_{10}), & (c_7,a_{11}), & (c_7,a_{12}), &
(c_8,a_{10}), & (c_8,a_{11}), & (c_8,a_{12}), &
(c_9,a_{10}), & (c_9,a_{11}), & (c_9,a_{12}) \\
\end{array}
\right\},$$
$$T_{13,3} = \left\{
\setlength{\arraycolsep}{1pt}
\begin{array}{lllllllll}
(a_{13},b_{16}), & (a_{13},b_{17}), & (a_{13},b_{18}), &
(a_{14},b_{16}), & (a_{14},b_{17}), & (a_{14},b_{18}), &
(a_{15},b_{16}), & (a_{15},b_{17}), & (a_{15},b_{18}), \\
(b_{13},c_{16}), & (b_{13},c_{17}), & (b_{13},c_{18}), &
(b_{14},c_{16}), & (b_{14},c_{17}), & (b_{14},c_{18}), &
(b_{15},c_{16}), & (b_{15},c_{17}), & (b_{15},c_{18}), \\
(c_{13},a_{16}), & (c_{13},a_{17}), & (c_{13},a_{18}), &
(c_{14},a_{16}), & (c_{14},a_{17}), & (c_{14},a_{18}), &
(c_{15},a_{16}), & (c_{15},a_{17}), & (c_{15},a_{18}) \\
\end{array}
\right\},$$
$$T = \left\{
\setlength{\arraycolsep}{2pt}
\begin{array}{llllll}
(a_1,b_7,c_{10}), & (a_1,b_7,c_{11}), & (a_1,b_7,c_{12}), &
(a_2,b_8,c_{10}), & (a_2,b_8,c_{11}), & (a_2,b_8,c_{12}), \\
(a_3,b_9,c_{10}), & (a_3,b_9,c_{11}), & (a_3,b_9,c_{12}), &
(a_4,b_{13},c_{16}), & (a_4,b_{13},c_{17}), & (a_4,b_{13},c_{18}), \\
(a_5,b_{14},c_{16}), & (a_5,b_{14},c_{17}), & (a_5,b_{14},c_{18}), &
(a_6,b_{15},c_{16}), & (a_6,b_{15},c_{17}), & (a_6,b_{15},c_{18}), \\\\
(a_7,b_{10},c_1), & (a_7,b_{11},c_1), & (a_7,b_{12},c_1), &
(a_8,b_{10},c_2), & (a_8,b_{11},c_2), & (a_8,b_{12},c_2), \\
(a_9,b_{10},c_3), & (a_9,b_{11},c_3), & (a_9,b_{12},c_3), &
(a_{10},b_1,c_7), & (a_{10},b_2,c_8), & (a_{10},b_3,c_9), \\
(a_{11},b_1,c_7), & (a_{11},b_2,c_8), & (a_{11},b_3,c_9), &
(a_{12},b_1,c_7), & (a_{12},b_2,c_8), & (a_{12},b_3,c_9), \\\\
(a_{13},b_{16},c_4), & (a_{13},b_{17},c_4), & (a_{13},b_{18},c_4), &
(a_{14},b_{16},c_5), & (a_{14},b_{17},c_5), & (a_{14},b_{18},c_5), \\
(a_{15},b_{16},c_6), & (a_{15},b_{17},c_6), & (a_{15},b_{18},c_6), &
(a_{16},b_4,c_{13}), & (a_{16},b_5,c_{14}), & (a_{16},b_6,c_{15}), \\
(a_{17},b_4,c_{13}), & (a_{17},b_5,c_{14}), & (a_{17},b_6,c_{15}), &
(a_{18},b_4,c_{13}), & (a_{18},b_5,c_{14}), & (a_{18},b_6,c_{15})
\end{array}
\right\},$$}
see also Figure \ref{F01}.
\end{example}

\begin{theorem}
{\sc $k$-MD} is NP-complete for bipartite graphs $G$ and each $k \geq 2$.
\end{theorem}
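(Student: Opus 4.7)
The plan is to reduce from {\sc 3D$k$M}, which is NP-complete by Theorem~\ref{T01}. From an instance $(A',B',C',S')$ with $|A'|=|B'|=|C'|=n$, I construct a bipartite graph $G$ and an integer $r$ such that $S'$ contains a $k$-matching iff $G$ has a $k$-resolving set of size at most $r$. On one side of the bipartition I place an \emph{element vertex} $v_x$ for every $x \in A' \cup B' \cup C'$; on the other side a \emph{triple vertex} $v_s$ for every $s \in S'$, joined to $v_a$, $v_b$ and $v_c$ whenever $s=(a,b,c)$. Each element vertex $v_x$ additionally carries a small bipartite \emph{forcing gadget} $H_x$. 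The bound is $r = k n + \sum_x f_x$, where $f_x$ is the number of vertices of $H_x$ that every $k$-resolving set must contain.

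The gadgets $H_x$ must satisfy three properties simultaneously: (i) $G$ stays bipartite; (ii) $G$ contains no twin pair, because for $k \ge 3$ a twin pair makes any $k$-resolving set impossible (only the two twins themselves can resolve the pair, yielding at most two resolvers); and (iii) each $H_x$ contributes $f_x$ vertices that must belong to every $k$-resolving set, and the only admissible resolvers of certain designated pairs associated with $v_x$ are these $f_x$ vertices together with the triple vertices incident to $v_x$.

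For the forward direction, given a $k$-matching $M \subseteq S'$, I set $R = \{v_s : s \in M\} \cup \bigcup_x F(H_x)$, where $F(H_x)$ is the forced subset of $H_x$, and argue by case analysis that every pair of vertices is resolved by at least $k$ members of $R$. Pairs inside a single gadget are handled by $F(H_x)$; an element pair $(v_x,v_{x'})$ is resolved by the $k$ triple neighbours of $v_x$ in $R$, supplied by the matching property, plus suitable gadget vertices; a triple pair $(v_s,v_{s'})$ is resolved by element vertices where $s$ and $s'$ disagree, with extra help from the gadgets on those elements. For the reverse direction, from a $k$-resolving set $R$ with $|R| \le r$ one first uses (iii) to deduce $F(H_x) \subseteq R$ for every $x$, leaving at most $kn$ vertices of $R$ among the element and triple vertices; a double-counting argument on the neighbourhoods of the $v_x$ then forces $R$ to contain at least $k$ triple vertices incident to every $v_x$, and the tightness of the bound makes the selected triples a $k$-matching.

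The main obstacle is engineering the gadget $H_x$ under the twin-freeness constraint (ii), which is vacuous for $k = 2$ but binding for $k \ge 3$. The natural forcing devices used in metric-dimension reductions, such as false-twin leaves or stacks of pendants, are precisely what property (ii) forbids. So $H_x$ must be a small asymmetric bipartite graph whose vertices have pairwise distinct distance profiles to the rest of $G$, yet whose internal structure still forces $f_x = k-1$ of its vertices into every $k$-resolving set and still makes the incident triple vertices the only admissible external resolvers of the designated pairs. Verifying (ii) and (iii) jointly, and ruling out parasitic configurations that would allow fewer than $k$ triples to be chosen at some element, is where the technical work concentrates.
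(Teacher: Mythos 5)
Your proposal is a reduction plan rather than a proof: everything hinges on the forcing gadget $H_x$, and you explicitly leave its construction and the joint verification of properties (ii) and (iii) as ``where the technical work concentrates.'' That is precisely the part that cannot be waved at. You correctly identify the obstruction --- for $k\ge 3$ a twin pair is resolved by at most its two members, so the standard pendant/false-twin forcing devices are unusable --- but you do not exhibit any gadget that circumvents it, so neither direction of your reduction can be checked. There is also a concrete inconsistency in your counting for the reverse direction: if the designated pairs at $v_x$ admit the $f_x=k-1$ forced gadget vertices as resolvers, then a $k$-resolving set needs only $k-f_x=1$ incident triple vertex per element, not $k$; your budget $r=kn+\sum_x f_x$ then does not force a $k$-matching. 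For the double-counting to work the designated pairs would have to be resolvable \emph{only} by incident triple vertices, but then the role of $f_x$ in that count evaporates and you must separately justify why the gadget vertices fail to resolve those pairs.

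The paper takes a related but materially different route that dissolves both problems. First, it reduces $k$-MD from {\sc 3D$(k\!-\!1)$M} rather than {\sc 3D$k$M}, adds a dummy element $a_0,b_0,c_0$ to each class, and uses the pairs $a_0a_i$, $b_0b_i$, $c_0c_i$ as the designated pairs: such a pair is resolved exactly by the $k-1$ chosen triples containing $a_i$ together with $a_0$ itself, which makes the count $k$ come out exactly. Second, the forcing gadgets are not attached to the $3n$ element vertices at all, but to a small set of hub vertices $v_A,v_B,v_C,v_0$ and to $m'=\lceil\log m\rceil$ ``bit'' vertices $d_1,\dots,d_{m'}$ that join $v_0$ to the triple vertices according to the binary representation of their indices (these $d_i$ also supply the $k$ resolvers for pairs of triple vertices, which your sketch leaves to element vertices that need not lie in $R$). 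Each gadget consists of two pendant paths (``legs'') of $\lceil k/2\rceil$ and $\lfloor k/2\rfloor$ vertices; the two leg vertices adjacent to the root are resolved only by the $k$ leg vertices themselves, which forces all of them into every $k$-resolving set without ever creating a twin pair for $k\ge 3$, and the legs then resolve essentially every pair except the designated ones. If you want to complete your own version, you would have to produce an explicit $H_x$ with the properties you list and redo the arithmetic; as it stands the argument has a hole exactly at its load-bearing point.
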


\begin{proof}\label{theorem-kMD}
The {$k$-MD} problem is obviously in NP, because it can be checked in polynomial time whether a set of vertices is a $k$-resolving set.

The NP-hardness is proven by a reduction from {\sc 3D($k$-1)M}. Let $A = \{a_1,...,a_n\}$, $B = \{b_1,...,b_n\}$, $C = \{c_1,...,c_n\}$, $S = \{s_1,...,s_m\}$ be an instance $I$ for {\sc 3D$(k-1)$M} where $k \geq 2$ and $n > k$. The aim is to define a graph $G=(V,E)$ and a number $x$ such that $G$ has a $k$-resolving set of size $x$ if and only if instance $I$ has a ($k$-1)-matching.

Graph $G$ is defined as follows, see also Figure \ref{F02}. It has a vertex $a_i$, $b_i$ and $c_i$ for $i=1,\ldots,n$ and a vertex $s_i$ for $i=1,\ldots,m$. Graph $G$ additionally contains vertices denoted by $a_0, b_0, c_0, v_0,v_A, v_B, v_C$ and $d_1,\ldots,d_{m'}$ where $m' = \lceil log(m)\rceil$. 
\begin{enumerate}
\item
Each vertex $a_i$, $0 \leq i \leq n$, is connected with
\begin{enumerate}
\item
vertex $v_A$,
\item
vertex $v_0$, and
\item vertex $s_j$, $1 \leq j \leq m$ if and only if triple $s_j$ contains element $a_i$.
\end{enumerate}
\item
Each vertex $b_i$, $0 \leq i \leq n$, is connected with
\begin{enumerate}
\item
vertex $v_B$,
\item
vertex $v_0$, and
\item
vertex $s_j$, $1 \leq j \leq m$ if and only if triple $s_j$ contains element $b_i$.
\end{enumerate}
\item
Each vertex $c_i$, $0 \leq i \leq n$, is connected with
\begin{enumerate}
\item
vertex $v_C$,
\item
vertex $v_0$, and
\item
vertex $s_j$, $1 \leq j \leq m$ if and only if triple $s_j$ contains element $c_i$.
\end{enumerate}
\item
Each vertex $d_i$, $1 \leq i \leq {m'}$, is connected with
\begin{enumerate}
\item
vertex $v_0$ and
\item
vertex $s_j$, $1 \le j \le m$, if and only if the $i$-th bit of the binary representation of $j$ is $1$.
\end{enumerate}
\end{enumerate}
Graph $G$ contains additionally so-called {\em leg vertices}. These leg vertices form paths ({\em legs }) with $\lceil k/2 \rceil$ or $\lfloor k/2 \rfloor$ vertices. Two such legs, one with $\lceil k/2 \rceil$ vertices and one with $\lfloor k/2 \rfloor$ vertices, are attached to each vertex of $L_{\text{root}} = \{v_A, v_B, v_C, v_0, d_1,\ldots,d_{m'}\}$, see Figure \ref{F02}. Set $L_{\text{root}}$ is the set of {\em root vertices} of the legs. Let $L_{v}$ be the set of vertices of the two legs at vertex $v$ and
$$L = L_{v_A} \, \cup \, L_{v_B} \, \cup \, L_{v_C} \, \cup \, L_{v_0} \, \cup \, L_{d_1} \, \cup \, \cdots  \, \cup \, L_{d_{m'}}$$ be the set of all leg vertices of $G$. Set $L_{\text{root}}$ has $4+m'$ vertices, each set $L_v$, $v \in L_{\text{root}}$, has $k$ vertices and $L$ has $(4+m') k$ vertices.

The graph $G$ can obviously be constructed in polynomial time from instant $I$.

First of all, let us note some properties of G.
\begin{enumerate}
\item [P1:]
$G$ is bipartite.
\item [P2:]
The distance between
\begin{enumerate}
\item
two vertices of $\{v_B, v_B, v_C\}$ is 4,
\item
two vertices of $\{d_1,\ldots,d_{m'}\}$ is 2,
\item
a vertex of $\{v_B, v_B, v_C\}$ and a vertex of $\{d_1,\ldots,d_{m'}\}$ is 3,
\item
vertex $v_0$ and a vertex of $\{v_B, v_B, v_C\}$ is 2, and
\item
vertex $v_0$ and a vertex of $\{d_1,\ldots,d_{m'}\}$ is 1.
\end{enumerate}
\item [P3:]
Every $k$-resolving set for $G$ contains all vertices of $L$. This follows from the observation that for each vertex $v \in L_{\text{root}}$ the two vertices of $L_v$ adjacent with $v$ are only resolved by the $k$ vertices of $L_v$.
\end{enumerate}

Now we will prove that $S$ has a ($k$-1)-matching for instance $I$ if and only if $G$ has a resolving set of size $$x=(4+m')k + 3 + (k-1)n.$$ 

\medskip
"$\Rightarrow:$"
Let $M\subseteq S$ be a ($k$-1)-matching for instance $I$. The aim is to show that 
$$R \, = \, L \, \cup \, \{a_0, b_0, c_0\} \, \cup \, M$$
is a $k$-resolving set for $G$ of size $$x=(4+m')k + 3 + (k-1)n,$$
that is, each pair of two distinct vertices $u_1,u_2$ of $G$ is resolved by at least $k$ vertices of $U$. Here the triple $s_j$ of $M$ are considered as vertices of $G$.

Consider the following case distinctions for two vertices $u_1$ and $u_2$.
\begin{enumerate}
\item $u_1, u_2 \in L_{v}$, $v \in L_{\text{root}}$.
\begin{enumerate}
\item $d(u_1,v) = d(u_2,v)$. Each of the $k$ vertices of $L_{v}$ resolves $u_1$ and $u_2$.
\item $d(u_1,v) \not= d(u_2,v)$. Each of the $k$ vertices of $L_{v'}$, $v' \in L_{\text{root}} \setminus \{v\}$, resolves $u_1$ and $u_2$.
\end{enumerate}
\item $u_1 \in L_{v_1}$, $u_2 \in L_{v_2}$, $v_1, v_2 \in L_{\text{root}}$, $v_1 \not= v_2$, and $d(u_1,v_1) \leq d(u_2,v_2)$. Each of the $k$ vertices of $L_{v_1}$ resolves $u_1$ and $u_2$.
\end{enumerate}

Up to this point all pairs of vertices $u_1,u_2$ are considered of which both are in $L$.
\begin{enumerate}
\setcounter{enumi}{2}
\item $u_1 \in L_{v_A} \cup L_{v_B} \cup L_{v_C}$ and $u_2 \not\in L$. Each of the $k$ vertices of $L_{v_0}$ resolves $u_1$ and $u_2$.
\item $u_1 \in L_{d_1} \cup \cdots \cup L_{d_{m'}}$ and $u_2 \not\in L$.
\begin{enumerate}
\item $u_2 \not\in \{v_B,v_C\}$. Each of the $k$ vertices of $L_{v_A}$ resolves $u_1$ and $u_2$.
\item $u_2 \not\in \{v_A,v_C\}$. Each of the $k$ vertices of $L_{v_B}$ resolves $u_1$ and $u_2$.
\item $u_2 \not\in \{v_A,v_B\}$. Each of the $k$ vertices of $L_{v_C}$ resolves $u_1$ and $u_2$.
\end{enumerate}
\item $u_1 \in L_{v_0}$ and $u_2 \not\in L$.
\begin{enumerate}
\item
$u_2 \in \{v_A,a_0,\ldots,a_n\}$. Each of the $k$ vertices of $L_{v_A}$ resolves $u_1$ and $u_2$.
\item
$u_2 \in \{v_B,b_0,\ldots,b_n\}$. Each of the $k$ vertices of $L_{v_B}$ resolves $u_1$ and $u_2$.
\item
$u_2 \in \{v_C,c_0,\ldots,c_n\}$. Each of the $k$ vertices of $L_{v_C}$ resolves $u_1$ and $u_2$.
\item
$u_2 \in \{d_i\} \cup \{s_j \, | \, \text{the $i$-th bit in the binary representation of $j$ is $1$} \}$. Each of the $k$ vertices of $L_{d_i}$ resolves $u_1$ and $u_2$.
\end{enumerate}
\end{enumerate}

Up to this point all pairs of vertices $u_1,u_2$ are considered of which at least one of them is in $L$.
\begin{enumerate}
\setcounter{enumi}{5}
\item $u_1  \in L_{\text{root}}$ and $u_2 \not\in L$. Each of the $k$ vertices of $L_{u_1}$ resolves $u_1$ and $u_2$.
\end{enumerate}

Up to this point all pairs of vertices $u_1,u_2$ are considered of which at least one of them is in $L \cup L_{\text{root}}$.
\begin{enumerate}
\setcounter{enumi}{6}
\item $u_1 = s_{i_1} \in \{s_1,\ldots,s_{m'}\}$ and $u_2 \not\in L \cup L_{\text{root}}$.
\begin{enumerate}
\item $u_2 = s_{i_2} \in \{s_1,\ldots,s_{m'}\}$. Each of the $k$ vertices of $L_{d_j}$ resolves $u_1$ and $u_2$, if the binary representation of $i_1$ and $i_2$ differs in position $j$.
\item $u_2 \in  \{a_0,\ldots,a_n\}$, $u_2 \in  \{b_0,\ldots,b_n\}$, or $u_2 \in \{c_0,\ldots,c_n\}$. Each of the $k$ vertices of $L_{v_A}$, $L_{v_B}$, or $L_{v_C}$, respectively, resolves $u_1$ and $u_2$.
\end{enumerate}
\end{enumerate}

Up to this point all pairs of vertices $u_1,u_2$ are considered of which at least one of them is in $L \cup L_{\text{root}} \cup \{s_1,\ldots,s_{m'}\}$.
\begin{enumerate}
\setcounter{enumi}{7}
\item $u_1 \in  \{a_1,\ldots,a_n\}$ and $u_2 \not\in L \cup L_{\text{root}} \cup \{s_1,\ldots,s_{m'}\}$.
\begin{enumerate}
\item $u_2 \in \{b_0,\ldots,b_n,c_0,\ldots,c_n\}$. Each of the $k$ vertices of $L_{v_A}$ resolves $u_1$ and $u_2$.
\item $u_2 \in \{a_1,\ldots,a_n\}$. Each vertex $s_i$ for which triple $s_i$ contains $u_1$ or $u_2$ resolves $u_1$ and $u_2$. There are $2(k-1) \geq k$ such vertices for $k \geq 2$.
\item $u_2 = a_0$. Each vertex $s_i$ for which triple $s_i$ contains $u_1$ resolves $u_1$ and $u_2$, and vertex $a_0$ resolves $u_1$ and $u_2$. Altogether these are exactly $(k-1) + 1 = k$ vertices.
\end{enumerate}

\item $u_1 \in  \{b_1,\ldots,b_n\}$ and $u_2 \not\in L \cup L_{\text{root}} \cup \{s_1,\ldots,s_{m'}\}$. (as in case 8)

\item $u_1 \in  \{c_1,\ldots,c_n\}$ and $u_2 \not\in L \cup L_{\text{root}} \cup \{s_1,\ldots,s_{m'}\}$. (as in case 8)

\item $u_1, u_2 \in \{a_0,b_0,c_0\}$. Each of the $k$ vertices of $L_{v_A}$, $L_{v_B}$ or $L_{v_C}$ resolves $u_1$ and $u_2$.
\end{enumerate}

Now all pairs of vertices $u_1,u_2$ of $G$ are considered and it is shown that all of them are resolved by at least $k$ vertices from $R$. Note that only the vertex pairs $u_1,u_2 \in \{a_0,\ldots,a_n\}$, $u_1,u_2 \in \{b_0,\ldots,b_n\}$ and $u_1,u_2 \in \{c_0,\ldots,c_n\}$ are not already resolved by $k$ vertices of $L$. Strictly speaking, not a single vertex from $L \cup \{v_A,v_B,v_C,v_0,d_1,\ldots,d_{m'}\}$ resolves such a pair of vertices.

\medskip
"$\Leftarrow:$"
Let $R\subseteq V$ be a $k$-resolving set for $G$ with $x=(4+m')k + 3 + (k-1)n$ vertices. By Property P3, $R$ contains all the $(4+m)'k$ vertices of $L$. This leaves $3 + (k-1) n$ vertices of $R$ that are not in $L$. Let us now consider the vertex pairs $a_0,a_i$, and $b_0,b_i$, and $c_0,c_i$ for $i = 1,\ldots, n$. The vertices of $L$ and the vertices of $\{v_A,v_B,v_C,v_0,d_1,\ldots,d_{m'}\}$ do not resolve these vertex pairs. The only way to resolve these $3n$ vertex pairs at least $k$ times with $3+(k-1)n$ vertices for $n > k \geq 2$, is to use $k$-$1$ vertices from $\{s_1,\ldots,s_m\}$ that form a $k$-$1$ matching and the three vertices $a_0, b_0, c_0$. This is the point where it is necessary that $n$ is greater than $k$. 
\end{proof}

In the introduction of this paper, we mentioned that the {\sc $k$-Metric Dimension} and the {\sc $(k, t)$-Metric Dimension} in \cite{EYR16} are the same if $t$ is set to the diameter of $G$. Since the constructed graph in Theorem \ref{theorem-kMD} has diameter $2 \cdot \lceil k/2 \rceil + 3$, Theorem \ref{theorem-kMD} also proves the NP-completeness of {\sc $(k, t)$-Metric Dimension} for bipartite graphs, each $k \geq 2$ and $t \geq 2 \cdot \lceil k/2 \rceil + 3$.

\begin{figure}
\center
\includegraphics[width=0.90\textwidth]{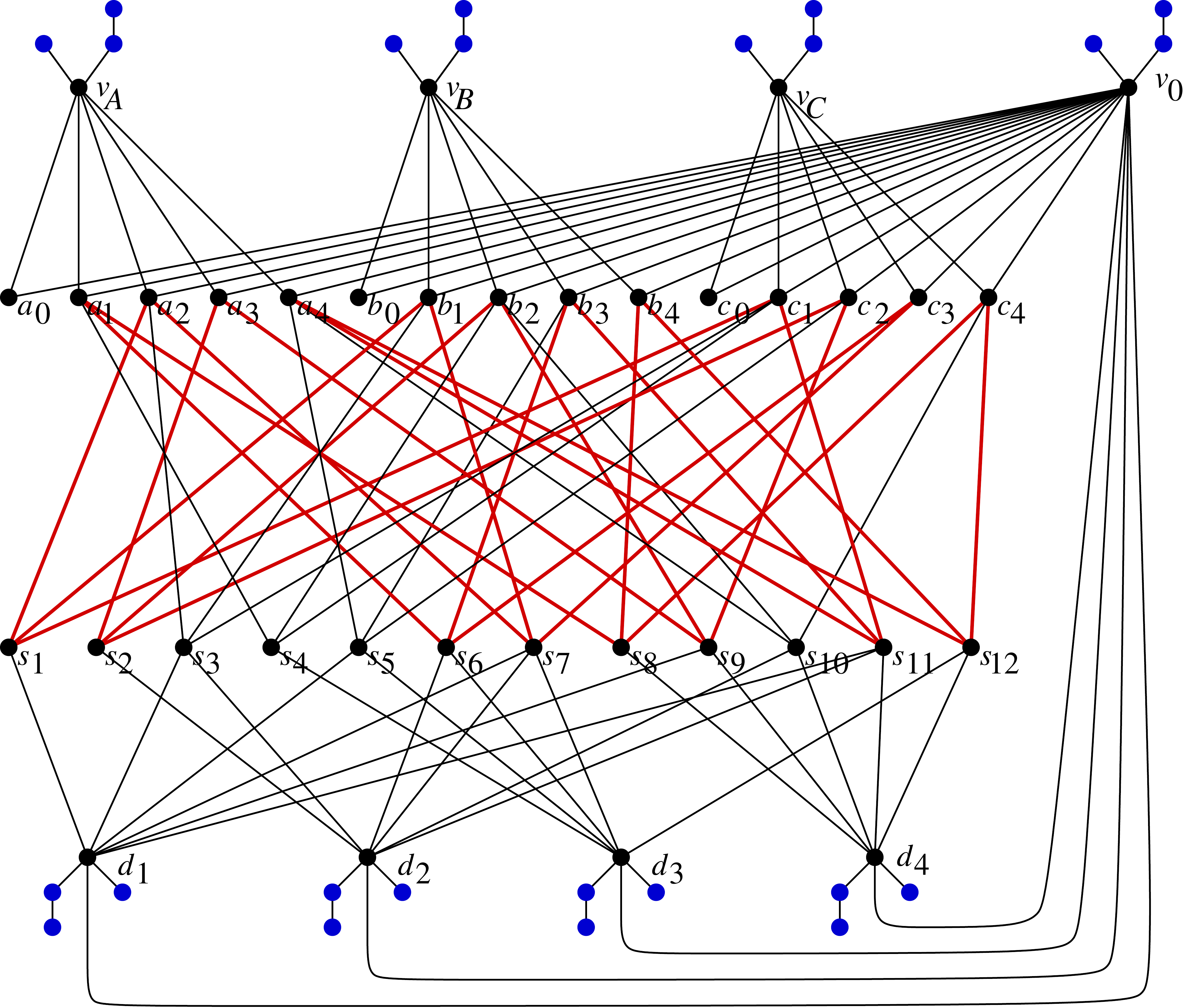}
\caption{This graphic illustrates the transformation from {\sc 3D2M} to {\sc 3-MD}. The Instance $I$ consisting of $A=\{a_1,\ldots,a_4\}$, $B=\{b_1,\ldots, b_4\}$, $C=\{c_1, \ldots, c_4\}$, $S=\{s_1,\ldots,s_{12}\}$ with $s_1=(a_2,b_1,c_1)$, $s_2=(a_3,b_2,c_2)$, $s_3=(a_2,b_1,c_1)$, $s_4=(a_1,b_2,c_1)$, $s_5 = (a_4,b_3,c_2)$, $s_6 = (a_1,b_3,c_3)$, $s_7=(a_2,b_1,c_3)$, $s_8=(a_1,b_4,c_4)$, $s_9=(a_3,b_2,c_2)$, $s_{10}=(a_4,b_2,c_4)$, $s_{11} = (a_4,b_3,c_1)$, $s_{12} = (a_4,b_4,c_4)$ for {\sc 3D2M} is transformed into the graph $G$ and $x=(4+4)3+3+(3-1)n=35$. The set of triples $M=\{s_1,s_2,s_6,s_7,s_8,s_9,s_{11},s_{12}\}$, indicated in the figure by the red lines, is a $2$-matching for instance $I$, where $L \cup \{a_0,b_0,c_0\} \cup M$ is a $3$-resolving set for $G$ of size $x$. Set $L$ is the set of vertices of the legs attached at the vertices $v_A,v_B,v_C,v_0,d_1,d_2,d_3,d_4$. In the figure, the vertices of $L$ are colored blue.
\label{F02}}
\end{figure}


\newcommand{\etalchar}[1]{$^{#1}$}


\end{document}